\documentclass[preprint]{elsarticle}
\usepackage{natbib}
\usepackage{amsthm}
\usepackage{amsmath}
\usepackage{graphicx}
\usepackage{xcolor}
\usepackage{caption}
\usepackage{subcaption}
\usepackage{tikz}
\usepackage{amssymb}
\usepackage{xurl}
\graphicspath{{/images/}}

\newtheorem{theorem}{Theorem}

\usepackage{todonotes}


\begin{document}
	
	\begin{frontmatter}	
		\title{Strategic Revenue Management of Preemptive versus Non-Preemptive Queues }
		\author{Jonathan Chamberlain\corref{cor1}}
		\ead{jdchambo@bu.edu}
		\author{David Starobinski}
		\ead{staro@bu.edu}
		\address{Department of Electrical and Computer Engineering, Boston University, 8 St Mary's Street, Boston MA 02215}
		
		\cortext[cor1]{Corresponding author}
		
		\begin{abstract}
		Consider a two-class unobservable priority queue, with Poisson arrivals, generally distributed service, and strategic customers. Customers are charged a fee when joining the premium class. We analyze the maximum revenue achievable under the non-preemptive (NP) and preemptive-resume (PR) policies, and show that a provider is always better off implementing the PR policy. Further, the maximum revenue under PR is sometimes achieved when only a fraction of the customers join the premium class.
		\end{abstract}
		
		\begin{keyword}
			Game Theory, Queuing Theory, Preemption, Pricing.
		\end{keyword}
		
	\end{frontmatter}
	
\section{Introduction}

Priority scheduling is utilized in many contexts in order to triage service to those who require more urgent service, or to enable a separate revenue stream by customers paying an additional fee to upgrade to higher priority service. 
In this work, we consider strategic behavior under an $M|G|1$ queue with two classes, where customers have the option to pay a fee to upgrade to the higher priority class. In particular, our focus is on the provider's revenue management, the stability of the resulting equilibrium, and the resulting social welfare. Under this model, we show that under a non-preemptive (NP) policy, the provider always has incentive to charge a fee such that all customers purchase a priority upgrade in order to maximize revenue. In contrast, under preemptive-resume (PR), for sufficiently high variance in service times and low traffic loads, revenues are maximized when the upgrade fee is at a level when only some customers are willing to purchase an upgrade. Otherwise, revenues are maximized when charging a fee where all customers will purchase an upgrade, as in NP.

In any event, the revenue-maximizing fee yields a stable equilibrium, and thus the provider is guaranteed to receive the maximum revenue by setting the fee accordingly. This holds regardless of the preemption policy, variance in service times, or traffic load. Furthermore, for any given traffic load and service time variance the provider will always have incentive to implement the PR policy over the NP one. 

\subsection{Related Work}

Analyses of performance of non-preemptive priority queues has been done for contexts including traffic routing over networks \cite{Walraevens2000}, hospital management \cite{Pekoz2002}, and smart grids \cite{Sadeghi2012}. Under these works however, customers/objects are tagged with class membership based upon predetermined characteristics and cannot purchase the ability to upgrade. Strategic behavior under preemptive-resume models are considered in \cite[pp 83-85]{ToQueueorNottoQueue}, \cite{Shi2019}, \cite{Gurvich2019}. However, these models explicitly or implicitly assume an $M|M|1$ queue to be in effect. Further, while two \cite{Shi2019} or an arbitrary number of classes \cite{Gurvich2019} are considered in the latter two works, there is no option to upgrade between classes, and thus the problem being considered is not identical to the one analyzed here. In \cite[pp 83-85]{ToQueueorNottoQueue}, customers are able to pay an upgrade fee. However, as it too is based on an $M|M|1$ queue, it is asserted that mixed equilibria states will never be stable. As determined in the analysis of the $M|G|1$ version of the model in \cite{chamberlain2020social}, this does not hold in general, which has implications for a revenue maximizing provider as noted below. 

\section{Game Model}
Our model is based on an unobservable $M|G|1$ priority queue, similar to that analyzed in~\cite{chamberlain2020social}. To keep the paper self-contained, we summarize in this section key results of~\cite{chamberlain2020social}, which serve as the basis of our main findings presented in Section~3.
In this queue, customers are by default assigned to the \emph{ordinary} class, but on entry may pay an optional fee $C$ to join the \emph{premium} class. We assume that paying $C$ only impacts a customer's priority class, and not the actual service.  We use the standard notation to denote the mean arrival rate ($\lambda$), mean service rate ($\mu$), and traffic load ($\rho = \lambda/\mu$). In addition, $C$ is used to indicate the fee to join the premium class, $\phi$ denotes the fraction of customers who have joined the premium class, and $K$ is a variance parameter defined such that the second moment of service is equal to $K/\mu^2$. 

We consider customer and provider behavior under the non-preemptive and preemptive-resume policies. The possible equilibria states depend on the customers' decisions, which in turn depend on the time spent waiting for service and the fee $C$. Letting $E[W_p]$ be the time spent waiting for service as a member of the premium class, and $E[W_o]$ be the time spent waiting for service as a member of the ordinary class, there are three possible equilibria states:

\begin{enumerate}
    \item \emph{All join} (i.e. $\phi = 1$): $E[W_p] + C < E[W_o]$ for all $\phi$.
    \item \emph{None join} (i.e. $\phi = 0$): $E[W_p] + C > E[W_o]$ for all $\phi$.
    \item \emph{Some join} (i.e. $E[W_p] + C = E[W_o]$ for some $\phi \in (0,1)$): Customers are indifferent, and join the premium class with probability $\phi$.
\end{enumerate}



\subsection{Equilibria under the NP policy}

The cost function associated with the NP policy is given as follows:
\begin{equation}
    \mathcal{C}_{NP}(\phi) = \frac{K\rho^2}{2\mu(1-\rho)(1-\phi\rho)}.
    \label{eq: C NP}
\end{equation}

Evaluating this function, we find that regardless of the values of the parameters $K$ and $\rho$, this is always a monotone increasing function in $\phi$. Thus it has the following equilibrium structure. If $C < \mathcal{C}_{NP}(0)$, \emph{all join} is the unique equilibrium. If $C > \mathcal{C}_{NP}(1)$, \emph{none join} is the unique equilibrium. If $\mathcal{C}_{NP}(0) < C < \mathcal{C}_{NP}(1)$, \emph{all join}, \emph{none join}, and \emph{some join} equilibrium $\phi = 1/\rho - (K\rho)/(2\mu C(1-\rho)$ are all possible equilibria. The pure \emph{all join} and \emph{none join} equilibria are always stable. The mixed \emph{some join} equilibrium is never stable.


\subsection{Equilibria under the PR policy}
The cost function for the PR policy is derived as follows:
\begin{equation}
    \mathcal{C}_{PR}(\phi) = \frac{K\rho + (2-K)\phi\rho(1-\rho)}{2\mu(1-\rho)(1-\phi\rho)}.
    \label{eq: C PR}
\end{equation}

Unlike $\mathcal{C}_{NP}(\phi)$, this function's behavior depends on the values of $K$ and $\rho$, although it is always monotone or constant:
\begin{enumerate}
    \item If $K > 2$ and $\rho < (K-2)/(2K-2)$, $\mathcal{C}_{PR}(\phi)$ is monotone decreasing.
    \item Else, if $K > 2$ and $\rho = (K-2)/(2K-2)$, $\mathcal{C}_{PR}(\phi)$ is constant valued.
    \item Otherwise, $\mathcal{C}_{PR}(\phi)$ is monotone increasing. 
\end{enumerate}

This in turn influences the equilibrium structure under preemptive-resume. In particular, there are circumstances in which the \emph{some join} equilibrium will be stable. If the \emph{some join} equilibrium exists, it is the solution to $\mathcal{C}_{PR}(\phi) = C$, and we denote this by 
\begin{equation*}
    \phi_e = \frac{2\mu C(1-\rho)K\rho}{\rho(1-\rho)(2\mu C + 2 - K)}. 
\end{equation*}
If $C < \min\mathcal{C}(\phi)$, \emph{all join} is the unique equilibrium. If $C > \max\mathcal{C}(\phi)$, \emph{none join} is the unique equilibrium. If $\min \mathcal{C}(\phi) < C < \max \mathcal{C}(\phi)$, and $\mathcal{C}(\phi)$ is monotone decreasing, the \emph{some join} equilibrium $\phi = \phi_e$ is the unique equilibrium. If $\min \mathcal{C}(\phi) < C < \max \mathcal{C}(\phi)$, and $\mathcal{C}(\phi)$ is monotone increasing, the equilibrium outcomes are similar to those of the NP policy.


With the equilibria states established, we now turn to analyzing the revenues collected from customers upgrading to the premium class under each preemption policy.

\section{Revenue Management}

As noted, the customers' decision of whether to join the premium class or not depends on the cost of waiting as a member of each class. This decision is influenced by the joining fee $C$, which is controlled by the provider. As the provider is rational, they are interested in maximizing their revenues, and thus will set $C$ appropriate. However, as seen in the previous section, not all possible equilibria are stable. In particular, any time a \emph{some join} state is unstable, a \emph{none join} equilibrium is possible. In such a case, there is a risk the provider would not collect any revenue from customers upgrading to the premium class. 

Thus, we define and analyze a revenue function for each preemption policy in order to determine the equilibrium corresponding to the provider's maximum revenue. We also whether whether that equilibrium is stable. While revenue is defined in terms of the cost and the total number of customers purchasing the upgrade, the latter is not knowable until the completion of service. Thus, we define the revenue $\mathcal{R}(\phi)$ in terms of average revenue per time unit. Given an arrival rate of $\lambda$, and the fraction $\phi$ of customers, we derive the revenue function as
\begin{equation}
    \mathcal{R}(\phi) = \lambda\phi\mathcal{C}(\phi).
    \label{eq: R phi definition}
\end{equation}

We further denote the maximum revenue under each preemption policy as $\mathcal{R}^* = \max_{\phi \in [0,1]} \mathcal{R}(\phi)$.

\subsection{NP Model}

Under the NP policy, the corresponding revenue function is derived from the definition in Equation~\eqref{eq: R phi definition} and $\mathcal{C}_{NP}(\phi)$ as follows:
\begin{equation}
    \mathcal{R}_{NP}(\phi) = \frac{K\rho^3\phi}{2(1-\rho)(1-\phi\rho)}.
    \label{eq: R NP}
\end{equation}

Evaluating the derivative of $\mathcal{R}_{NP}(\phi)$, we determine that for all $K$ and $\rho$, the function is monotone increasing. As a result, the maximum revenue is obtained when $\phi = 1$. The resulting maximum revenue is equal to
\begin{equation}
    \mathcal{R}_{NP}^* = \mathcal{R}_{NP}(1) = \frac{K\rho^3}{2(1-\rho)^2}.
    \label{eq: R NP max}
\end{equation}

Thus, under the non-preemptive policy, the variance in service and the traffic load will influence the revenue collected, but the provider will always have incentive to set the fee $C$ to be equal to $\mathcal{C}_{NP}(1)$. As shown below, this is not necessarily the case under the preemptive-resume policy. 

\subsection{PR Model}

Under the PR policy, the corresponding revenue function is derived from Equations~\eqref{eq: R phi definition} and~\eqref{eq: C PR} as
\begin{equation}
    \mathcal{R}_{PR}(\phi) = \frac{K\phi\rho^2 + (2-K)\phi^2\rho^2(1-\rho)}{2(1-\rho)(1-\phi\rho)}
    \label{eq: R PR}
\end{equation}

Unlike under NP, this revenue function's behavior does depend on the values of $K$ and $\rho$. To show this, we compute the derivative with respect to $\phi$:
\begin{equation}
    \mathcal{R}_{PR}'(\phi) = \frac{\rho^2\Big(2\phi(1-\rho)(2-\phi\rho)+K(1-\phi(1-\rho)(2-\phi\rho)\Big)}{2(1-\rho)(1-\phi\rho)^2}.
\end{equation}

Evaluating the expression, we determine that the sign of the derivative (and thus the increasing/decreasing behavior of $\mathcal{R}_{PR}(\phi)$) depends solely on the sign of the expression 
\begin{equation*}
    2\phi(1-\rho)(2-\phi\rho)+K(1-\phi(1-\rho)(2-\phi\rho).
\end{equation*} Given the restrictions on the parameters, $K \in [1,\infty)$ and $\rho \in (0,1)$, we solve for where the expression is positive to determine where $\mathcal{R}_{PR}(\phi)$ is increasing. In doing so, we determine the following: If $1 \leq K \leq 4$, $\mathcal{R}_{PR}(\phi)$ is monotone increasing regardless of the value of $\rho$. If $K > 4$, $\mathcal{R}_{PR}(\phi)$ is monotone increasing if $\rho \in [3/2 - (1/2)\sqrt{(5K-2)/(K-2)},1)$. If $K > 4$, $\mathcal{R}_{PR}(\phi)$ is unimodal with a unique maximum if $\rho \in (0,3/2 - (1/2)\sqrt{(5K-2)/(K-2)})$.

In the first two cases, the monotone increasing behavior of $\mathcal{R}_{PR}(\phi)$ results in the maximum revenue being achieved when $\phi = 1$. Thus, $\mathcal{R}_{PR}^* = \mathcal{R}_{PR}(1)$, which equals
\begin{equation}
     \frac{K\rho^2 + (2-K)\rho^2(1-\rho)}{2(1-\rho)^2}.
     \label{eq: R PR max 1}
\end{equation}

However, when $K > 4$, and $\rho < 3/2 - (1/2)\sqrt{(5K-2)/(K-2)}$, $\mathcal{R}_{PR}(\phi)$ is unimodal with a unique maximum. As a result, the revenue is maximized at a $\emph{some join}$ equilibrium. The value of $\phi$ which maximizes $\mathcal{R}_{PR}(\phi)$ is
\begin{equation*}
    \phi^{max} = \frac{1}{\rho}\Bigg(1 - \sqrt{\frac{K-2-2\rho(K-1)}{(K-2)(1-\rho)}}\Bigg).
\end{equation*} 
The corresponding maximum revenue is equal to 
\begin{equation}
    \mathcal{R}_{PR}^* = \frac{2(K-2)-\rho(3K-4)}{2(1-\rho)}-(K-2)\sqrt{\frac{K-2-2\rho(K-1)}{(K-2)(1-\rho)}}.
    \label{eq: R PR max 2}
\end{equation}

\subsection{Equilibrium Stability and Guarantee of Maximum Revenue}

Having determined the values of the maximum possible revenue, we face the question of whether the provider is actually guaranteed to receive such revenue if the fee $C$ is set accordingly, due to the existence of potentially unstable equilibria. However, we assert that this is not the case here.
\begin{theorem}
    The provider is always guaranteed to receive their maximum possible revenue if $C$ is set accordingly, regardless of the preemption policy, variance in service times, or traffic load. 
\end{theorem}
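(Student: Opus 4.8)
The plan is to read off, policy by policy and regime by regime, the equilibrium structure produced by the fee that attains $\mathcal{R}^*$, and to verify in each case that the equilibrium at the revenue-maximizing fraction $\phi^*$ is stable (indeed essentially unique), so that no \emph{none join} outcome can deprive the provider of that revenue. By Equation~\eqref{eq: R phi definition}, inducing fraction $\phi^*$ means charging $C=\mathcal{C}(\phi^*)$ for the policy in force, so the whole argument reduces to inspecting the equilibria at that one value of $C$. The organizing observation, which I would record first, is that the only configuration the preceding discussion flags as dangerous --- a monotone \emph{increasing} cost function together with an \emph{interior} revenue-maximizing $\phi^*$ --- never actually arises.

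\emph{Case 1: $\mathcal{R}(\phi)$ is monotone increasing, so $\phi^*=1$.} This covers NP for all $K,\rho$, and PR whenever $1\le K\le 4$, or $K>4$ with $\rho\ge 3/2-(1/2)\sqrt{(5K-2)/(K-2)}$. The provider sets $C=\mathcal{C}(1)$. I would show \emph{none join} is then impossible: if the operative cost function is increasing, $\mathcal{C}(1)=\max_\phi\mathcal{C}(\phi)$ and \emph{none join} requires the strict inequality $C>\max_\phi\mathcal{C}(\phi)$, which fails; if instead $\mathcal{C}_{PR}$ is decreasing, then $\mathcal{C}_{PR}(1)=\min_\phi\mathcal{C}_{PR}(\phi)<\max_\phi\mathcal{C}_{PR}(\phi)$, so again $C>\max_\phi\mathcal{C}_{PR}(\phi)$ fails. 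Because $\mathcal{C}_{NP}$, and $\mathcal{C}_{PR}$ in each of its three regimes, is strictly monotone or constant, $\phi=1$ is the only non-degenerate root of $\mathcal{C}(\phi)=C$, so the candidate \emph{some join} equilibrium collapses onto \emph{all join}. Hence \emph{all join} is the operative equilibrium; since the pure \emph{all join} state is always stable, the provider collects $\mathcal{R}^*=\lambda\,\mathcal{C}(1)$, i.e.\ the value in Equation~\eqref{eq: R NP max} or~\eqref{eq: R PR max 1}.

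\emph{Case 2: PR with $K>4$ and $\rho<3/2-(1/2)\sqrt{(5K-2)/(K-2)}$, so $\mathcal{R}_{PR}$ is unimodal with interior maximizer $\phi^{max}\in(0,1)$.} The preliminary step is the elementary inequality $3/2-(1/2)\sqrt{(5K-2)/(K-2)}<(K-2)/(2K-2)$ for $K>4$, which places this parameter range inside $\{K>2,\ \rho<(K-2)/(2K-2)\}$; hence $\mathcal{C}_{PR}(\phi)$ is strictly \emph{decreasing} on $[0,1]$. The provider sets $C=\mathcal{C}_{PR}(\phi^{max})$; since $\phi^{max}\in(0,1)$ and $\mathcal{C}_{PR}$ is strictly decreasing, this fee obeys $\min_\phi\mathcal{C}_{PR}(\phi)=\mathcal{C}_{PR}(1)<C<\mathcal{C}_{PR}(0)=\max_\phi\mathcal{C}_{PR}(\phi)$. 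By the PR equilibrium structure recalled above, for a strictly decreasing cost and $C$ strictly between its extremes the \emph{some join} state $\phi_e$ is the \emph{unique} equilibrium; as $\mathcal{C}_{PR}$ is one-to-one, $\phi_e=\phi^{max}$, and a unique equilibrium is stable. So the provider realizes $\mathcal{R}^*$, namely Equation~\eqref{eq: R PR max 2}. The two cases exhaust all admissible $K$, $\rho$ and both policies, proving the claim.

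I expect the main obstacle to be the boundary bookkeeping in Case 1: making it precise that at the knife-edge fee $C=\mathcal{C}(1)$ the \emph{all join} state is the equilibrium the system actually settles at --- ruling out the degenerate \emph{some join} root and confirming that \emph{none join} is excluded --- and disposing of the measure-zero sub-case where $\mathcal{C}_{PR}$ is constant, which is cleanest handled as a limit of the monotone regimes. The only other non-routine ingredient is the inequality in Case 2 nesting the unimodal-revenue regime inside the decreasing-cost regime, which is exactly what makes the ``unique \emph{some join}'' structure the one that applies.
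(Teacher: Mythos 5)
Your Case 2 is essentially the paper's own argument: the decisive step, there as here, is the inequality $3/2-\tfrac12\sqrt{(5K-2)/(K-2)}<(K-2)/(2K-2)$ for $K>4$, which nests the unimodal-revenue regime inside the region where $\mathcal{C}_{PR}$ is monotone decreasing, so that $\phi^{max}$ is a stable (indeed unique) mixed equilibrium and the revenue in Equation~\eqref{eq: R PR max 2} is secured.

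The flaw is in Case 1, in the claim that at $C=\mathcal{C}(1)$ the \emph{none join} outcome is impossible because ``none join requires $C>\max_\phi\mathcal{C}(\phi)$.'' That condition characterizes when \emph{none join} is the \emph{unique} equilibrium, not when it exists: \emph{none join} is an equilibrium whenever a customer deviating from the state $\phi=0$ gains at most $\mathcal{C}(0)\le C$, i.e.\ whenever $C\ge\mathcal{C}(0)$. For the increasing cost functions covered by your Case 1 (NP for all parameters, and PR when it is increasing), the paper's equilibrium structure says this explicitly: for $\mathcal{C}_{NP}(0)<C<\mathcal{C}_{NP}(1)$ the \emph{all join}, \emph{none join} and \emph{some join} states all exist and both pure ones are stable, and nothing improves at the boundary fee $C=\mathcal{C}_{NP}(1)>\mathcal{C}_{NP}(0)$. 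So your uniqueness claim fails, and the coordination risk you are trying to exclude is precisely the one the paper itself flags at the start of Section~3. What the paper actually proves --- and what its ``guaranteed'' means --- is only that the revenue-maximizing equilibrium is stable: for $\phi^{*}=1$ the \emph{all join} equilibrium is always stable, which you also note, so deleting the erroneous ``none join is impossible'' step leaves a Case 1 that coincides with the paper's proof. (Your sub-argument for the corner of Case 1 where $\mathcal{C}_{PR}$ is decreasing is fine, since there $\mathcal{C}_{PR}(1)=\min_\phi\mathcal{C}_{PR}(\phi)<\mathcal{C}_{PR}(0)$, so \emph{none join} genuinely cannot arise.)
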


\begin{proof}
    For the NP policy, this is straightforward, since per Equation \eqref{eq: R NP max}, the maximum revenue is obtained when $\phi = 1$. This corresponds to the \emph{all join} equilibrium, which is always stable, and thus the revenue is guaranteed.
    
    In the PR policy, this holds for the same reason so long as $\mathcal{R}_{PR}(\phi)$ is monotone increasing. Thus, assume that $\mathcal{R}_{PR}(\phi)$ is unimodal with a unique maximum. In this case, $\phi^{max}$ is a mixed equilibrium. Mixed equilibria are stable so long as the corresponding cost function $\mathcal{C}_{PR}(\phi)$ is itself monotone decreasing. 
    
    Thus, if $\mathcal{C}_{PR}(\phi)$ is shown to be monotone decreasing under these circumstances, then $\phi^{max}$ is ESS stable and thus the corresponding revenue is guaranteed. $\mathcal{C}_{PR}(\phi)$ is monotone decreasing when $K > 2$ and $\rho < (K-2)/(2K-2)$. $\mathcal{R}_{PR}(\phi)$ is unimodal if $K > 4$ and $\rho < 3/2 - (1/2)\sqrt{(5K-2)/(K-2)}$.
    
    Thus, clearly the condition on the service variance parameter $K$ is satisfied, and thus we must simply show that for $K > 4$, the inequality 
    \begin{equation*}
        \frac{3}{2} - \frac{1}{2}\sqrt{\frac{5K-2}{K-2}} < \frac{K-2}{2K-2}
    \end{equation*}
    holds. Solving for $K$, we determine that the inequality holds if $K < 0$ or $K > 2$, thus it certainly holds when $K > 4$. Therefore the traffic load resulting in a unimodal $\mathcal{R}_{PR}(\phi)$ is in the range which results in a monotone decreasing $\mathcal{C}_{PR}(\phi)$. Therefore, the resulting revenue maximizing equilibrium $\phi^{max}$ is stable and therefore the maximum revenue is guaranteed.
\end{proof}

With the maximum revenues guaranteed to be obtained if $C$ is set accordingly by the provider, we next consider which preemption policy will result in the greatest revenues from customer upgrades.


\subsection{Comparison of Maximum Revenues}

As the provider is rational and selects the preemption policy in effect, they are incentivized to select the policy which corresponds to the greater maximum revenue for a given traffic load $\rho$ and variance in service as denoted by $K$. We claim that regardless of the traffic load or variance in service, a rational revenue maximizing provider will always implement a preemptive-resume policy:

\begin{theorem}
    Consider a two class $M|G|1$ queuing model, where customers pay a fee $C$ if they wish to upgrade to the premium class. The provider is always better off implementing the preemptive-resume policy, as the maximum revenue under the non-preemptive policy is always lower.
    \label{thm: comp revs}
\end{theorem}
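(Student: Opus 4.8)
The plan is to sidestep any case analysis on the shape of $\mathcal{R}_{PR}(\phi)$. Recall from Section~3.1 that $\mathcal{R}_{NP}^* = \mathcal{R}_{NP}(1)$ for all $K$ and $\rho$, and observe that trivially $\mathcal{R}_{PR}^* = \max_{\phi\in[0,1]}\mathcal{R}_{PR}(\phi) \ge \mathcal{R}_{PR}(1)$, since $\phi = 1$ is always a feasible point. Hence it suffices to show the pointwise inequality $\mathcal{R}_{PR}(1) > \mathcal{R}_{NP}(1)$ for every $K \ge 1$ and $\rho \in (0,1)$; the result then follows by chaining $\mathcal{R}_{PR}^* \ge \mathcal{R}_{PR}(1) > \mathcal{R}_{NP}(1) = \mathcal{R}_{NP}^*$.

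To establish that pointwise inequality, I would write both quantities over the common denominator $2(1-\rho)^2$: from \eqref{eq: R PR max 1} we have $\mathcal{R}_{PR}(1) = \big(K\rho^2 + (2-K)\rho^2(1-\rho)\big)/\big(2(1-\rho)^2\big)$, and from \eqref{eq: R NP max} we have $\mathcal{R}_{NP}(1) = K\rho^3/\big(2(1-\rho)^2\big)$. Subtracting, the $K$-dependent terms in the numerator cancel, leaving $K\rho^2 + (2-K)\rho^2(1-\rho) - K\rho^3 = 2\rho^2(1-\rho)$, so that
\begin{equation*}
  \mathcal{R}_{PR}(1) - \mathcal{R}_{NP}(1) = \frac{2\rho^2(1-\rho)}{2(1-\rho)^2} = \frac{\rho^2}{1-\rho} > 0 .
\end{equation*}
This not only proves the strict inequality but quantifies the gap: the provider gains at least $\rho^2/(1-\rho)$ per time unit by switching to PR, uniformly in $K$.

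I do not anticipate a genuine obstacle; the only thing to notice is that the complicated closed form \eqref{eq: R PR max 2} for the unimodal regime never needs to be invoked, because $\mathcal{R}_{PR}(1)$ already dominates the NP optimum regardless of whether $\phi = 1$ is itself the PR optimizer. If a more self-contained comparison in the unimodal case were desired, one could instead substitute \eqref{eq: R PR max 2} and \eqref{eq: R NP max} directly and verify the inequality algebraically under the conditions $K > 4$ and $\rho < 3/2 - (1/2)\sqrt{(5K-2)/(K-2)}$, but this is strictly more work and is subsumed by the monotonicity-free argument above.
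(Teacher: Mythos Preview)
Your proposal is correct and follows essentially the same approach as the paper: both arguments reduce to the algebraic fact that $\mathcal{R}_{PR}(1) - \mathcal{R}_{NP}(1) = 2\rho^2(1-\rho)/\big(2(1-\rho)^2\big) > 0$, and then invoke $\mathcal{R}_{PR}^* \ge \mathcal{R}_{PR}(1)$ to cover the unimodal case. The only cosmetic difference is that the paper opens with a case split on the shape of $\mathcal{R}_{PR}(\phi)$ before observing, as you do from the outset, that the pointwise inequality at $\phi=1$ suffices for both cases.
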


\begin{proof}
    Let $\rho$ and $K$ be arbitrary but fixed. There are two cases to consider based on the behavior of the corresponding $\mathcal{R}_{PR}(\phi)$.
    
    Assume that $\mathcal{R}_{PR}(\phi)$ is monotone increasing, thus $\mathcal{R}_{PR}^*$ is defined as in Equation \eqref{eq: R PR max 1}. Comparing this to $\mathcal{R}_{NP}^*$ as defined in Equation \eqref{eq: R NP max}, we claim the following holds if $\mathcal{R}_{PR}(\phi)$ is monotone increasing:
    \begin{equation*}
        \frac{K\rho^2 + (2-K)\rho^2(1-\rho)}{2(1-\rho)^2} > \frac{K\rho^3}{2(1-\rho)^2}.
    \end{equation*}
    This reduces to determining whether $2(1-\rho)\rho^2 > 0$, which is true for all $\rho \in (0,1)$, and is independent of $K$. Thus, $\mathcal{R}_{PR}^* > \mathcal{R}_{NP}^*$ in the case where $\mathcal{R}_{PR}(\phi)$ is monotone increasing. In fact, this result shows that $\mathcal{R}_{PR}(1) > \mathcal{R}_{PR}^*$ in general. This implies that as expected, $\mathcal{R}_{PR}^* > \mathcal{R}_{NP}^*$ as well when $\mathcal{R}_{PR}(\phi)$ is unimodal. 
    
\end{proof}

Therefore, we find that regardless of the variance in service time distribution, and regardless of the traffic load, the provider is always best off implementing the preemptive-resume policy in order to maximize their revenues. 

\subsection{Impact on Social Welfare}

We now turn to a brief analysis of how the maximum revenue situation impacts the social welfare of the system. Under this model, the social welfare is defined in terms of the average wait time across all customers, as all other costs are either fixed or are a transfer of payment between players \cite{chamberlain2020social}:
\begin{equation}
    \mathcal{S}(\phi) = \phi E[W_p] + (1-\phi) E[W_o]
    \label{eq: social welfare definition}
\end{equation}

Under the non-preemptive policy, we determine that the social welfare is constant valued, a result we expect as wait times in an $M|G|1$-NP queue are known to be constant with respect to the reordering of customers:
\begin{equation}
    \mathcal{S}_{NP}(\phi) = \frac{K\rho}{2\mu(1-\rho)}.
\end{equation}
Thus, under NP, the provider does not impact the overall welfare by behaving in a revenue maximizing fashion. However, under the preemptive-resume policy, social welfare is not constant as preemption behaviors impact the overall wait times due to the differences between the service time of the preempting customer, and the residual service of the customer being preempted. The social welfare function under PR is derived from the definition as follows

\begin{equation}
    \mathcal{S}_{PR}(\phi) = \frac{\rho\Big(K - 2\phi\rho + (2-K)\phi(1-\phi(1-\rho)\Big)}{2\mu(1-\rho)(1-\phi\rho)}
\end{equation}

Here, the value of $K$ determines which states are socially optimal: If $K < 2$, the socially optimal states are the pure equilibria states $\phi \in \{0,1\}$. If $K = 2$, all equilibria states are socially optimal. If $K > 2$, the socially optimal state is the \emph{some join} equilibrium
    \begin{equation*}
        \phi^* = \frac{1-\sqrt{1-\rho}}{\rho}.
    \end{equation*} 


From this, we conclude that a revenue maximizing provider also acts in the socially optimal manner whenever $K \leq 2$. If $K > 2$, this is not the case. In fact, for $2 < K \leq 4$, we find that a revenue maximizing provider is in the worst case social outcome, as the \emph{all join} equilibrium leads to the longest expected wait times. When $K > 4$, under sufficiently low traffic loads the worst case outcome is avoided. However, the revenue maximizing equilibrium will never be socially optimal under these circumstances.

\section{Conclusions}

In this work, we analyzed an $M|G|1$ queue with two priority classes where customers are strategic and have the ability to upgrade to a premium class. We compared the maximum provider's revenue under the non-preemptive and preemptive-resume policies, and showed that a rational provider is always best off implementing the preemptive-resume policy. Further, under sufficiently high variance in service and low traffic loads, there exist scenarios where the revenue is maximized by only a fraction of the customers joining the premium class. 

\section*{Acknowledgements}
This work was  partially supported by NSF grants 1717858 and 1908087.
	
	\bibliographystyle{elsarticle-num}
	\bibliography{bibfile}
	
\end{document}